\newtheorem{theorem}{Theorem}
\DeclareMathOperator*{\argmax}{argmax}
\begin{document}

\title{Efficient Serverless Function Scheduling at the Network Edge\\
% {\footnotesize \textsuperscript{*}Note: Sub-titles are not captured in Xplore and
% should not be used}
% \thanks{}
}

% \author{\IEEEauthorblockN{1\textsuperscript{st} Jiong Lou}
% \IEEEauthorblockA{\textit{dept. name of organization (of Aff.)} \\
% \textit{name of organization (of Aff.)}\\
% City, Country \\
% email address or ORCID}
% \and
% \IEEEauthorblockN{2\textsuperscript{nd} Zhiqing Tang}
% \IEEEauthorblockA{\textit{dept. name of organization (of Aff.)} \\
% \textit{name of organization (of Aff.)}\\
% City, Country \\
% email address or ORCID}
% \and
% \IEEEauthorblockN{3\textsuperscript{rd} Hao Luo}
% \IEEEauthorblockA{\textit{dept. name of organization (of Aff.)} \\
% \textit{name of organization (of Aff.)}\\
% City, Country \\
% email address or ORCID}
% \and
% \IEEEauthorblockN{4\textsuperscript{th} Weijia Jia}
% \IEEEauthorblockA{\textit{dept. name of organization (of Aff.)} \\
% \textit{name of organization (of Aff.)}\\
% City, Country \\
% email address or ORCID}
% \and
% \IEEEauthorblockN{6\textsuperscript{th} Jie Li}
% \IEEEauthorblockA{\textit{dept. name of organization (of Aff.)} \\
% \textit{name of organization (of Aff.)}\\
% City, Country \\
% email address or ORCID}
% }
\author{
\IEEEauthorblockN{
Jiong Lou\IEEEauthorrefmark{1},
Zhiqing Tang\IEEEauthorrefmark{2},
Shijing Yuan\IEEEauthorrefmark{1},
Jie Li\IEEEauthorrefmark{1},
and Chengtao Wu\IEEEauthorrefmark{1}}
\IEEEauthorblockA{\IEEEauthorrefmark{1}Department of Computer Science and Engineering, Shanghai Jiao Tong University, China}
\IEEEauthorblockA{\IEEEauthorrefmark{2}Guangdong Key Lab of AI and Multi-Modal Data Processing, BNU-HKBU United International College, China}
\IEEEauthorblockA{lj1994@sjtu.edu.cn, zhiqingtang@bnu.edu.cn,\{2019ysj,lijiecs,wuct\}@sjtu.edu.cn}
\thanks{\textit{(Corresponding author: Jie Li)}}
% \IEEEauthorblockA{Corresponding Author: Michael Shell \quad Email: ms@michaelshell.org}
}

\maketitle

\begin{abstract}
% The emerging paradigm, serverless computing, is suitable for edge computing for its inherent features, e.g., lightweight virtualization, rapid scalability, and economic efficiency. However, previous studies have not comprehensively considered the significant latency issues caused by frequent function replacement and severe request blocking in serverless function scheduling, which is exacerbated at the resource-limited network edge. To fill in this gap, we formulate the Serverless Function Scheduling (SFS) problem in resource-limited edge computing, aiming to minimize the average response time. The SFS problem is proved to be NP-hard. To efficiently solve this intractable scheduling problem, we first consider a simplified offline version of the problem and design a polynomial-time optimal scheduling policy based on each function's weight. Then, inspired by the simplified problem, we propose an Enhanced Shortest Function First (ESFF) policy, where the function weight represents the scheduling urgency. To avoid wasteful cold starts, ESFF selectively decides the initialization of new function instances at the arrival time of requests. To deal with the dynamic workloads, ESFF judiciously replaces serverless functions based on the function weight at the completion time of requests. Extensive simulations based on real-world serverless request traces show that ESFF consistently and substantially outperforms existing baselines under different settings.

Serverless computing is a promising approach for edge computing since its inherent features, e.g., lightweight virtualization, rapid scalability, and economic efficiency. However, previous studies have not studied well the issues of significant cold start latency and highly dynamic workloads in serverless function scheduling, which are exacerbated at the resource-limited network edge. In this paper, we formulate the Serverless Function Scheduling (SFS) problem for resource-limited edge computing, aiming to minimize the average response time. To efficiently solve this intractable scheduling problem, we first consider a simplified offline form of the problem and design a polynomial-time optimal scheduling algorithm based on each function's weight. Furthermore, we propose an Enhanced Shortest Function First (ESFF) algorithm, in which the function weight represents the scheduling urgency. To avoid frequent cold starts, ESFF selectively decides the initialization of new function instances when receiving requests. To deal with dynamic workloads, ESFF judiciously replaces serverless functions based on the function weight at the completion time of requests. Extensive simulations based on real-world serverless request traces are conducted, and the results show that ESFF consistently and substantially outperforms existing baselines under different settings.
\end{abstract}

\begin{IEEEkeywords}
Serverless function scheduling, Edge computing, Faas.
\end{IEEEkeywords}

\section{Introduction}
In edge computing, computation resources are deployed at the network edge to offer low-latency services \cite{shi2016edge}. The edge servers are still resource-limited compared with cloud servers. Besides, the mobility of users in edge computing generates highly dynamic workloads, which can be rapidly reduced to zero \cite{ouyang2018follow}. Serverless computing \cite{castro2019rise}, a new paradigm designed for dynamic and infrequent short-lived computations, is promising for edge computing \cite{aslanpour2021serverless}. Firstly, the serverless function is initialized on demand to avoid resource wastage due to pre-provision and long-running containers. Secondly, it can automatically and instantly scale up for peak workloads and scale down for zero workloads. Thirdly, it is more economical for it only charges for CPU time spent in request execution but does not charge for idle cycles \cite{wang2018peeking}.

% AWS IoT Greengrass \cite{AWSIoTGreengrass} and OpenFaas \cite{openfaas} are designed for resource-constrained edge computing. Some studies have been proposed to optimize the configuration, placement and transmission of serverless functions \cite{das2020performance, elgamal2018costless, cicconetti2021faas}. In \cite{zhang2021edge}, STOIC dispatches serverless functions in edge-cloud systems and leverages hardware acceleration for serverless function execution. Nevertheless, the challenges of massive cold starts and severe request blocking due to limited edge resources still exist. In \cite{wang2021lass}, the most related work, LaSS uses model-based approaches to estimate suitable instance number for each function for individual deadlines. However, this work differs from LaSS in the following aspects: (1) LaSS configures the function instance number without considering the limited resources. (2) LaSS ignores the frequent cold starts since it periodically optimizes the function instance number. (3) We aim to optimize the average response time while LaSS tries to satisfy each function's deadline. 

Most of the current serverless platforms, e.g., OpenWhisk \cite{openwhisk}, AWS lambda \cite{awslambda} and Azure \cite{Azure}, are designed for cloud computing, unsuitable to resource-limited edge computing. They scale up serverless function instances when there is no idle instance for waiting requests \cite{wang2018peeking}. Though the waiting time for an idle instance may be eliminated in this way, this mechanism can easily bring function instance over-provision \cite{gunasekaran2020fifer} and frequent cold starts \cite{wu2022container} (i.e., the process of initializing a new function instance). These two problems result in delayed response and largely affect the Quality of Experience (QoE). Therefore, serverless computing at the network edge should be improved to reduce the average response time.

There are a few studies proposed to extend serverless computing to the network edge \cite{pan2022retention,hall2019execution,das2020performance,cicconetti2021faas,wang2021lass,bermbach2022auctionwhisk}. Most of these studies focus on data communication \cite{cicconetti2021faas}, function placement \cite{hall2019execution,bermbach2022auctionwhisk,das2020performance} and function cache \cite{pan2022retention}. However, all these studies still ignore the severe problem of serverless function over-provision in edge computing. Two schedulers are designed to partially mitigate the problem of function over-provision and frequent container startups, considering the scheduling of a single function \cite{OpenWhiskscheduler,wang2021lass}. The OpenWhisk V2 is designed to separate the control flow and data flow \cite{OpenWhiskscheduler}.  LaSS \cite{wang2021lass} estimates the instance number for each function to meet the individual deadline based on queue theory but neglects the cold start latency. 

However, these studies still do not comprehensively overcome the following challenges: (1) Massive cold starts are incurred by frequent replacement of serverless functions, which prolongs the current request's response time and delays the future requests' executions. Initializing a function instance for each waiting request \cite{openwhisk,fuerst2021faascache} or setting a waiting threshold \cite{OpenWhiskscheduler} cannot globally reduce cold starts or optimize the average response time. (2) Short serverless function requests are blocked by long requests, which increases the average response time. Request bursts are very common for highly dynamic workloads in edge computing \cite{9796969}, making the request blocking worse. These challenges make the serverless function scheduling in edge computing much intractable and result in poor performance of existing methods.

\begin{figure*}[t]
    \setlength{\abovecaptionskip}{0cm} %调整图片标题与图距离
    \setlength{\belowcaptionskip}{0cm} %调整图片标题与下文距离
    \centering
    \subfigure[OpenWhisk]{
    \centering
    \includegraphics[width=0.3\textwidth]{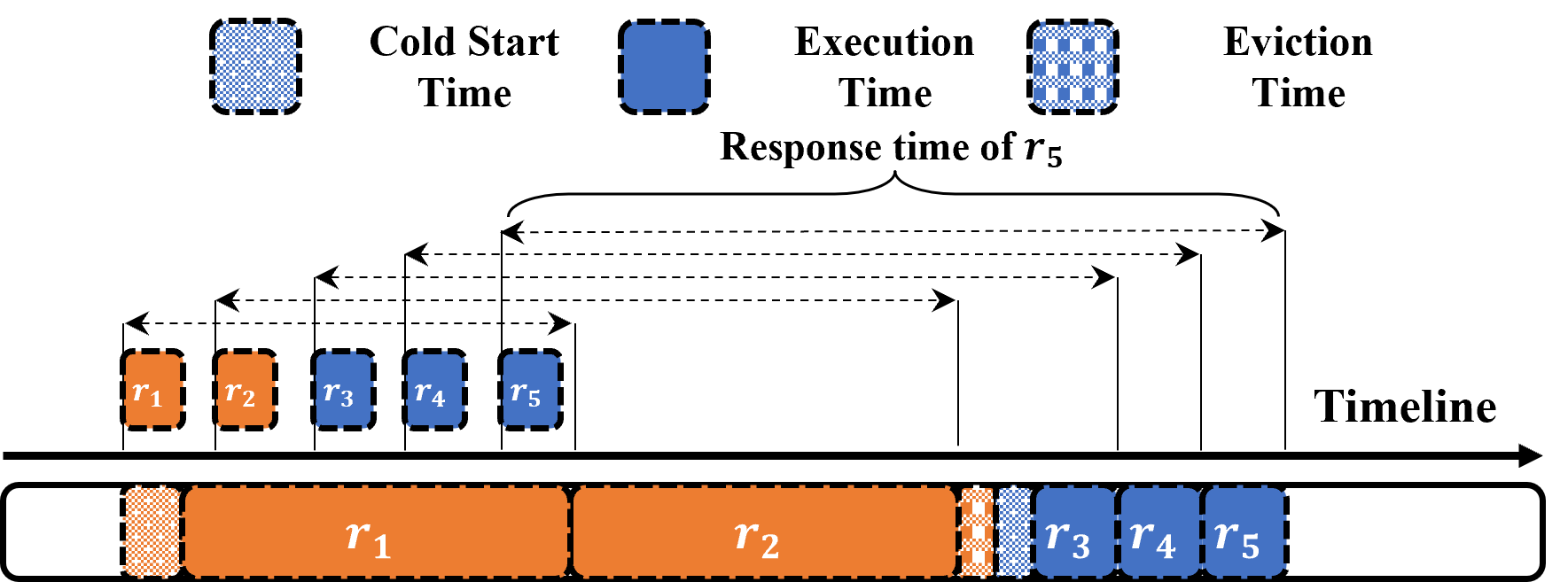}
    \label{OpenWhisk}
    }% 
    \subfigure[OpenWhisk V2]{
    \centering
    \includegraphics[width=0.3\textwidth]{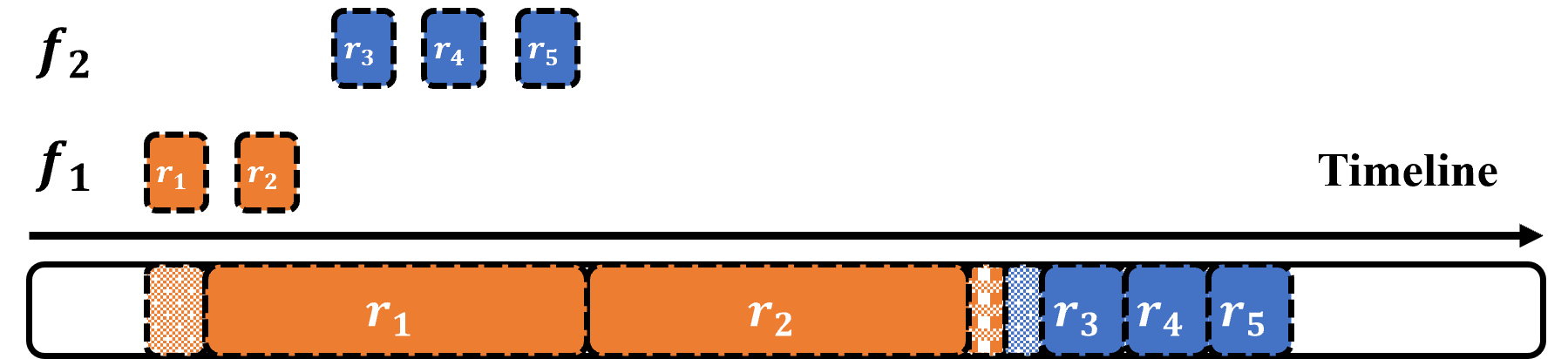}
    \label{OpenWhiskv2}
    }
    \subfigure[ESFF]{
    \centering
    \includegraphics[width=0.3\textwidth]{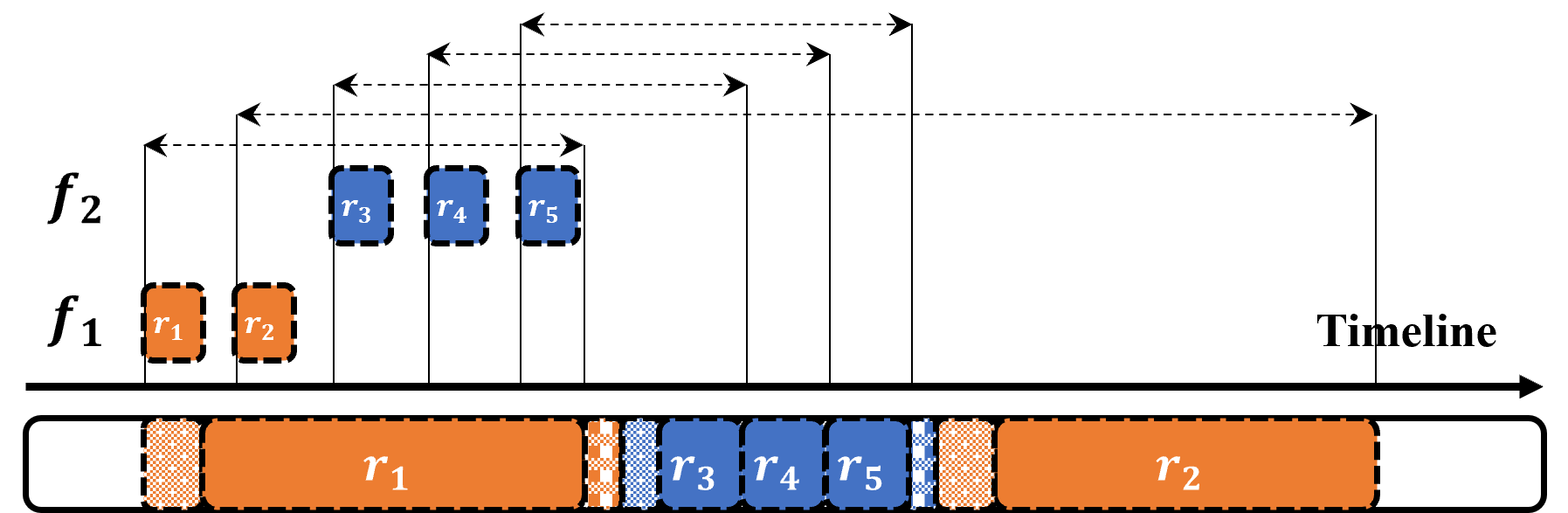}
    \label{ESFF}
    }%
    \centering
    \caption{\label{motivation} An illustrative scheduling example of OpenWhisk, OpenWhisk V2 and ESFF. The requests with the same color belong to the same function. Two long requests $r_1$, $r_2$ of function $f_1$ arrive earlier than there short requests $r_3$, $r_4$, $r_5$ of function $f_2$. ESFF achieves the shortest average response time. }
    \vspace{-0.3cm}
  \end{figure*}

In this paper, we formulate the Serverless Function Scheduling (SFS) problem in resource-limited edge computing, aiming to minimize the average response time, which is proved to be NP-hard. To efficiently solve this intractable scheduling problem, we first analyze a simplified form of the SFS (SSFS) problem: offline scheduling of requests on an edge server that can execute a single function at once. Then, based on the optimal scheduling algorithm of SSFS, we propose an Enhanced Shortest Function First (ESFF) algorithm, considering the aforementioned challenges. ESFF consists of two sub-policies: (1) Function Creation Policy (FCP). FCP selectively initializes a new function instance at the request arrival time by judging whether the average response time can benefit from the function initialization. (2) Function Replacement Policy (FRP). FRP replaces functions at the request completion time according to the function weight, i.e., the scheduling urgency of a function.

To the best of our knowledge, we are the first to formulate the Serverless Function Scheduling problem, fully considering the significant cold start latency and highly dynamic function workloads in resource-limited edge computing. The contributions of this paper are summarised as follows:
\begin{enumerate}
 \item Firstly, we analyze a simplified problem SSFS and prove that the optimal scheduling of SSFS can be obtained in polynomial time. 
 \item Secondly, we design a lightweight scheduling algorithm to solve the SFS problem in an online manner. ESFF judiciously prioritizes each serverless function based on the execution time, the cold start time, and the number of waiting requests. It initializes and replaces new function instances based on function weights. 
 \item Finally, extensive simulations based on real-world serverless request traces \cite{zhang2021faster} show that ESFF substantially and consistently outperforms existing baselines under different settings.
\end{enumerate}

\section{Motivation}
In Fig.~\ref{motivation}, we give a simple example to illustrate the motivation of serverless function scheduling. Fig.~\ref{OpenWhisk} and \ref{OpenWhiskv2} generated by different scheduling algorithms show the same results. In Fig.~\ref{ESFF}, the proposed algorithm ESFF produces the best scheduling in terms of the average response time.
% edge computing. In this example, the resource-limited edge node can only hold a function at any time. Five requests arrive at different time. The requests with the same color belong to the same function. 
% The scheduling of ESFF is better than OpenWhisk and OpenWhisk V2.

In Fig.~\ref{OpenWhisk}, OpenWhisk processes the requests in the central queue based on the ascending order of their arrival time. As a result, short requests $r_3$, $r_4$ and $r_5$ are seriously blocked by long requests $r_1$ and $r_2$. In a real edge environment, the request bursts are very common \cite{varghese2016challenges}, so many short requests will be blocked by long requests with the scheduling of OpenWhisk. In Fig.~\ref{OpenWhiskv2}, OpenWhisk V2 maintains an individual queue for each function. If there are requests waiting in the queue, the required function instance will continue to process these requests. This design still results in blocking. In Fig.~\ref{OpenWhiskv2}, when request $r_1$'s execution is finished, request $r_2$ has already arrived at the edge server, waiting in the queue. Moreover, if multiple requests of the function $f_1$ arrive at the edge server before finishing the request $r_2$, the requests $r_3$, $r_4$, and $r_5$ will be blocked again, which is unreasonable and far from optimal.

In Fig.~\ref{ESFF}, ESFF also sends each request to its individual function queue. It differs from OpenWhisk V2 in that after finishing a request, a function instance can be selectively replaced by another function instead of processing the requests in its queue. Therefore, short requests will not be largely blocked by long ones, and ESFF achieves the best performance.

% This simple example emphasizes the importance of serverless function scheduling in edge computing. In more complicated scenarios, a dedicated scheduling algorithm should be designed.

\section{System Model and Problem Formulation}

\subsection{System Model}
\textit{Request Model:} In this paper, the requests of different serverless functions are released by mobile users and arrive at the serverless platform over time. A request is denoted as $r_{i} \in \mathbf{R}$, where $\mathbf{R}$ is the set of requests. Request $r_{i}$'s arrival time is denoted as  $t^a_i$. The execution time of request $r_{i}$ is denoted as $t^e_{i}$. Due to the dynamic nature of edge computing, the arrival time and execution time of each request are unknown in prior and hard to predict \cite{8057116}. The start execution time and the completion time of the request $r_{i}$ are denoted as $t^{s}_{i}$ and $t^{c}_{i}$, respectively.

% The execution of each request depends on multiple factors, e.g., input data, server performance fluctuation, and function type, so it is hard to estimate.

% TODO l_{i}

\textit{Serverless Function Model:} The function set is denoted by $\mathbf{F} = \{f_1,f_2, \dots, f_{|\mathbf{F}|}\}$. The function of request $r_{i}$ is denoted by $l_{i} \in \mathbf{F}$. Before processing a request, the corresponding function instance should be initialized, named cold start. For a function $f_{j}$, the cold start latency is defined as $t^l_{j}$. In this paper, it is assumed that request execution cannot be interrupted. $k^j_o$ represents the $o$-th instance of $f_j$, with two states: (1) Idle state, $state(k^j_o)=1$, when waiting for requests. (2) Busy state, $state(k^j_o)=1$, when processing a request. After finishing a request, the function instance turns idle, waiting for future requests. An idle function can be evicted to release the resources. For a function $f_{j}$, the eviction time is defined as $t^v_{j}$. The set of initialized instances of function $f_j$ is defined as $\mathbf{K}^j = \{k^j_1,k^j_2,\dots,k^j_{|\mathbf{K}^j|}\}$, which changes over time.

\textit{Edge Server Model:} For simplicity, the serverless platform is assumed to be deployed on a resource-constrained edge server\footnote{Multiple edge servers inter-connected by the ultra-low latency network can be modeled as a powerful edge server for the neglectable transmission time.}. The resource capacity of the edge server is represented by the maximum number of function instances that can be executed concurrently, denoted as $C$. 

% The total number of current initialized function instances cannot exceed $C$, $\sum_{f_j \in \mathbf{F}}{|\mathbf{K}^j|} \leq C$.

\subsection{Problem Formulation}
For each request $r_i$, it cannot be processed before arriving, i.e., $t^a_i < t^s_i$. As each request's processing cannot be interrupted, therefore, $t^c_i = t^s_i+t^e_i$. Before processing a request $r_i$, an instance of the corresponding function $l_{i}$ should be initialized and in the idle state.

At any time, the edge server can hold at most $C$ function instances. Each function instance can process one request of its type at once~\cite{openwhisk}. When an instance of $f_j$ is initialized without evicting any function instance, it requires $t^l_j$ for cold start; When it is initialized by evicting an instance of function $f_{j^{\prime}}$, it requires $t^v_{j^{\prime}}$ for eviction and $t^l_j$ for cold start. 

% In this paper, we do not consider acceleration techniques like warm pool \cite{lin2019mitigating} or lightweight virtualization \cite{agache2020firecracker}, that can be applied to reduce the cold start latency and focus on optimizing it by scheduling.

The response time of a request $r_{i}$ is $t^r_i = t^c_i-t^a_i$, including both the execution time and waiting time. The objective is to optimize the average response time of all requests:
\begin{equation}
    \setlength{\belowdisplayskip}{2pt}
    \setlength{\abovedisplayskip}{2pt}
    T = \frac{1}{|\mathbf{R}|}\sum_{r_i \in \mathbf{R}}{t^r_i}.
\end{equation}
The SFS problem is an online scheduling problem, i.e., making scheduling decisions without future information.

\begin{theorem} \label{sfsnphard}
    The SFS problem is NP-hard.
  \end{theorem}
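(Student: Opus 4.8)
The plan is to establish NP-hardness by exhibiting a polynomial-time reduction from a known NP-hard problem to the offline, full-information version of SFS. Although SFS is posed as an online problem, hardness is most naturally interpreted as the intractability of computing an optimal schedule even when the entire request set $\mathbf{R}$, together with every $t^a_i$, $t^e_i$, $t^l_j$, and $t^v_j$, is known in advance; since any procedure deciding the online problem must in particular solve this offline instance, it suffices to prove the offline version NP-hard. A useful orientation is that the hardness cannot stem from a single function: by the paper's own SSFS result, scheduling requests of one function offline is polynomial-time solvable. The reduction must therefore exploit the interaction between \emph{multiple} functions and their cold starts.

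Concretely, I would restrict SFS to the special family of instances in which all requests share a common arrival time, $t^a_i = 0$ for every $r_i$, the capacity is $C = 1$, and every eviction is free, $t^v_j = 0$. Under these settings only one instance exists at a time, and beginning a run of consecutive requests of a function $f_j$ costs exactly one cold start $t^l_j$, since switching away from $f_{j'}$ and back to $f_j$ costs $t^v_{j'} + t^l_j = t^l_j$. This is precisely single-machine scheduling with sequence-independent family (batch) setup times under the total-completion-time objective, where each function is a job family with setup $t^l_j$ and each request $r_i$ with $l_i = f_j$ is a job of processing time $t^e_i$. Because $t^a_i = 0$ yields $t^r_i = t^c_i$, minimizing $T = \frac{1}{|\mathbf{R}|}\sum_{r_i \in \mathbf{R}} t^r_i$ is equivalent to minimizing the total completion time $\sum_{r_i \in \mathbf{R}} t^c_i$. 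I would then map an instance of the NP-hard batch-setup total-completion-time problem to an SFS instance of this form and argue, for the threshold decision versions, that the source instance admits a schedule of total completion time at most its bound if and only if the constructed SFS instance admits average response time at most the corresponding bound; the construction is plainly polynomial in size.

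The main obstacle is the equivalence proof together with pinning down the source problem. Two points need care. First, one must invoke, or for a self-contained argument re-derive, the NP-hardness of the batch-setup scheduling problem; the cleanest strongly-NP-hard route is a gadget reduction from 3-PARTITION, and designing this gadget, that is, choosing the per-family setups $t^l_j$ and the job lengths $t^e_i$ so that an optimal total-completion-time schedule is \emph{forced} to group jobs into the sought triples, is the delicate step (note that a bounded number of families admits a polynomial dynamic program, so the construction must use an unbounded number of functions). Second, I must confirm that the features switched off in the restriction, namely online arrivals, $C>1$, and nonzero eviction, genuinely generalize the constructed instances, so that hardness of the special case transfers to full SFS; this is immediate from the definitions but should be stated explicitly, since it is what licenses concluding that SFS itself, and not merely a sub-case, is NP-hard.
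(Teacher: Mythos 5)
Your proposal is correct and takes essentially the same approach as the paper: the paper likewise proves hardness by observing that the offline special case with zero eviction times, all arrivals at time zero, and full information is exactly the known NP-hard problem of scheduling families of jobs with setup times to minimize total completion time, citing Liaee and Emmons for that hardness rather than re-deriving it from 3-PARTITION. The only cosmetic difference is that the paper states the special case for parallel machines rather than fixing $C=1$.
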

% \begin{proof}
%     The theorem is proved in the technique report.
% \end{proof}

% TODO

  \begin{proof}
  To prove the NP-hardness of the SFS problem, we consider the scheduling families (akin to functions in this paper) of jobs on parallel machines involving sequence-dependent setup time \cite{liaee1997scheduling}. The scheduling problem is a special case of the SFS problem: the eviction time of each function is set to zero, all requests arrive at the edge server at the time of zero, and all information of requests is known before scheduling. The NP-hardness of this scheduling problem has been proved in \cite{liaee1997scheduling}. Therefore, the SFS problem is NP-hard.
 \end{proof}

% Due to the NP-hardness of the SFS problem, we first analyze a simplified form of SFS (SSFS) and design an optimal scheduling algorithm.

\section{Optimal Scheduling Algorithm of SSFS}
Due to the NP-hardness of SFS, we consider a simplified form of SFS (SSFS) and design an optimal scheduling algorithm, which can efficiently be extended for the SFS problem.

SSFS is simplified in the following aspects: (1) The edge server is unary, i.e., holding at most one initialized function instance at any time. (2) The requests of the same function have identical execution time, i.e., $t^e_i = t^e_{i^{\prime}}=t_j$ if $l_i=l_{i^{\prime}}=f_j$. $t_j$ is the request $f_j$'s execution time and also named as $f_j$'s execution time. (3) All requests arrive at the edge server at time 0. (4) All functions' request number, cold start time, eviction time, and execution time are known before scheduling. The request number of the function $f_j$ is denoted as $n_j$.

\textbf{Optimal scheduling algorithm of the SSFS problem:} Since the edge server is unary, the scheduling of SSFS is equivalent to sequencing requests. Similar to the shortest job first, the optimal scheduling algorithm of SSFS has two steps. (1) Each function is associated with a weight, which is computed by $w_j = t_j + \frac{t^l_j+t^v_j}{n_j}$. (2) Each function's requests are processed continuously and processed in ascending order of function weights.

% (the function with a smaller weight is processed first)

\begin{theorem} \label{SSFSoptimal}
    The above algorithm generates the optimal scheduling of the SSFS problem.
  \end{theorem}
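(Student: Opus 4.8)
The plan is to first reduce the objective to a pure sequencing cost. Since every request arrives at time $0$, the response time of $r_i$ equals its completion time $t^c_i$, so minimizing the average response time $T$ is exactly minimizing the total completion time $\sum_{r_i \in \mathbf{R}} t^c_i$ over all feasible request sequences on the unary server. I would then prove the theorem in two stages: (i) a \emph{grouping} claim that some optimal schedule processes all requests of each function consecutively, so that it suffices to order function blocks; and (ii) an interchange argument showing that, among grouped schedules, the ascending order of $w_j$ is optimal.

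Stage (i) is the step I expect to be the main obstacle, since a priori splitting a function's requests around the short requests of another function might appear to help. I would show it never does, via a local exchange. Suppose an optimal schedule splits $f_j$ into an earlier block $A_1$ of $p_1$ requests and a later block $A_2$ of $p_2$ requests, with a maximal run $M$ of other functions strictly between them, where $M$ carries $m$ requests and has total length $W$ (its executions together with its internal setups). Consider the two candidate repairs that merge $A_1$ and $A_2$: pulling $A_2$ forward (order $A_1A_2,\,M$) and pushing $A_1$ backward (order $M,\,A_1A_2$). The key computation is that the reduction in total completion time (original minus repaired) is $p_2\,(W-m\,t_j)$ for the first repair and $p_1\,(m\,t_j-W)$ for the second; these have opposite signs, so at least one repair is non-increasing. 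Moreover, the separable setup structure (cost $t^v_{j'}+t^l_j$ between consecutive functions $f_{j'},f_j$) means merging two blocks of $f_j$ also deletes exactly one cold start $t^l_j$ and one eviction $t^v_j$, shrinking the makespan and moving every later completion earlier, which can only help. Each repair strictly reduces the number of blocks, so iterating terminates at a grouped schedule that is still optimal.

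For stage (ii) I would fix a grouped schedule, write the function order as $[1],[2],\dots$, and set $D_j=t^l_j+n_j t_j+t^v_j$ to be the total occupancy of $f_j$'s block (cold start, executions, and eviction); separability makes the eviction of the very last block the only setup never charged, which is harmless for ordering. A direct accounting shows the $k$-th request of the block in position $p$ completes at $\sum_{q<p} D_{[q]}+t^l_{[p]}+k\,t_{[p]}$, so the total completion time equals an order-independent constant (depending only on which functions are present) plus $\sum_{q<p} n_{[p]}\,D_{[q]}$. I would then apply the standard adjacent interchange: swapping two neighboring blocks $a,b$ changes this sum by $n_a D_b-n_b D_a$, so a schedule is optimal precisely when every adjacent pair satisfies $D_a/n_a\le D_b/n_b$, i.e.\ when the blocks are sorted in ascending order of $D_j/n_j = t_j+(t^l_j+t^v_j)/n_j = w_j$. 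This is exactly the rule used by the algorithm, completing the proof.
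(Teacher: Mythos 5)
Your proof is correct and follows essentially the same two-stage strategy as the paper: first a grouping argument that compares the two candidate repairs (pulling the later sub-block forward versus pushing the earlier one backward) and observes their effects have opposite signs, then an adjacent-interchange argument showing the blocks must be sorted by $w_j = t_j + (t^l_j+t^v_j)/n_j$. Your handling of the grouping step via iteration on the number of blocks (rather than the paper's strict contradiction, which implicitly assumes $t^l_j+t^v_j>0$) and your explicit decomposition of the total completion time into an order-independent constant plus $\sum_{q<p} n_{[p]}D_{[q]}$ are minor presentational refinements, not a different route.
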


% \begin{proof}
%     The theorem is proved in the technique report.
% \end{proof}

\begin{figure}[tbp]
    \begin{center}
      \includegraphics[width=0.9\linewidth]{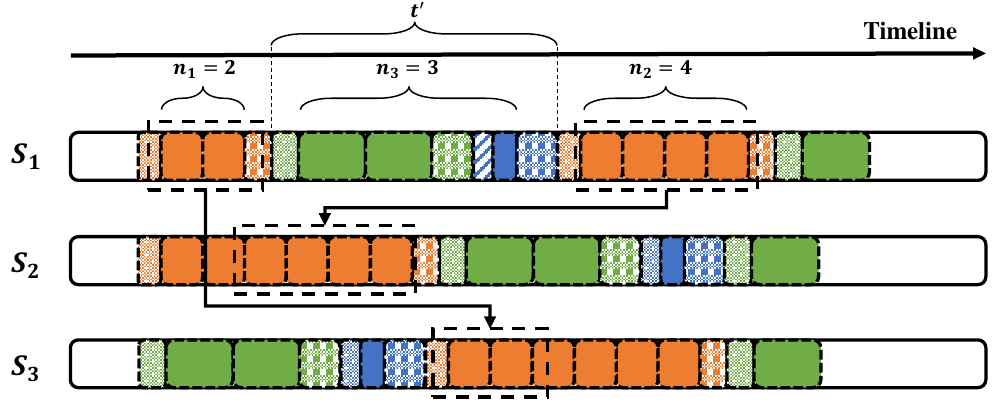}
      \caption{\label{putting1} The schedule $S_1$ is changed to $S_2$ and $S_3$ by putting requests of the same function together.}
  \end{center}
  \end{figure}
\addtolength{\topmargin}{0.041in}

% TODO
  \begin{proof}
  The proof has two parts. We first prove that the requests of the same function should be processed continuously and then show that the function processing sequence follows the ascending order of function weights.
  
 We prove the first part by contradiction. As shown in Fig.~\ref{putting1}, it is assumed that in an optimal schedule $S_1$, multiple requests of the same function $f_j$ are divided into two parts by some other functions' requests. The request numbers of the two parts are denoted as $n_1$ and $n_2$, respectively. The request number of other functions' requests is $n_3$. These $n_3$ requests consist of requests different functions and the total time duration of $n_3$ requests is $t^{\prime}$ (including their execution time, eviction time, and cold start time). The total response time of this schedule is $T_1$, the total response time of putting $n_2$ requests before $n_3$ requests (the schedule is named $S_2$) is $T_2$, and the total response time of putting $n_1$ requests after $n_3$ requests (the schedule is named $S_3$) is $T_3$. The putting operations are shown in Fig.~\ref{putting1}, where $n_1=2$, $n_2=4$ and $n_3=3$. The number of requests processed after all these requests is denoted as $n_4$. We compare the $T_1$ and $T_2$:
 \begin{equation}
     T_1-T_2 = n_2(t^l_i+t^v_i+t^{\prime})-n_3n_2t^e_i+(t^l_i+t^v_i)n_4. \notag
 \end{equation}
 Since $T_1$ is the minimum, $0 \geq T_1-T_2$ and:
 \begin{align}
      0 & \geq n_2(t^l_i+t^v_i+t^{\prime})-n_3n_2t^e_i+(t^l_i+t^v_i)_in_4 \notag\\&> n_2t^{\prime}-n_3n_2t^e_i \notag\\&= n_2(t^{\prime}-n_3t^e_i).\notag
 \end{align}
 We obtain:
 \begin{equation} \label{contradtion1}
       n_3t^e_i > t^{\prime}.
 \end{equation}
 Then, $T_1$ and $T_3$ are compared:
 \begin{equation}
     T_1-T_3 = n_3(t^l_i+t^v_i+t^e_in_1)+(n_4+n_2)(t^l_i+t^v_i)-t^{\prime}n_1. \notag
 \end{equation}
 Since $0 \geq T_1-T_3$, we obtain:
 \begin{align}
 0 &\geq n_3(t^l_i+t^v_i+t^e_in_1)+(n_4+n_2)(t^l_i+t^v_i)-t^{\prime}n_1 \notag\\ &> n_1n_3t^e_i - n_1t^{\prime} \notag\\&= n_1(n_3t^e_i-t^{\prime}). \notag
 \end{align}
 Therefore,
 \begin{equation} \label{contradtion2}
     t^{\prime} > n_3t^e_i.
 \end{equation}
  There exists a contradiction in Eqs.\eqref{contradtion1} and \eqref{contradtion2}. One of $T_2$ and $T_3$ is less than $T_1$. Therefore, in the optimal schedule of SSFS, requests of the same functions should be processed continuously.
  
  \begin{figure}[tbp]
    \begin{center}
      \includegraphics[width=0.9\linewidth]{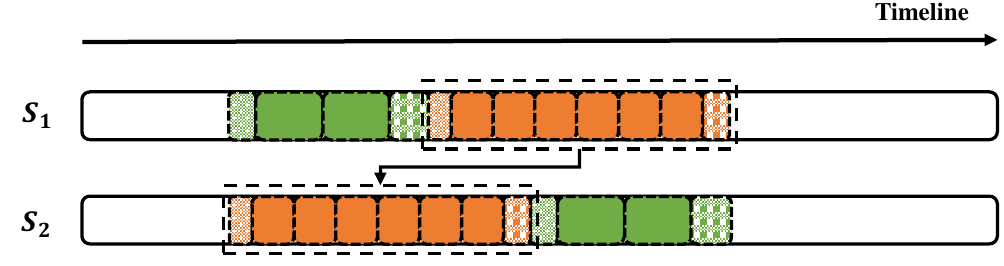}
      \caption{\label{putting2} The schedule $S_1$ is changed to $S_2$ by exchanging the execution order of two functions' requests.}
  \end{center}
  \end{figure}
  Then, we prove the second part that functions' executions are sorted in the ascending order of their weights. In the optimal schedule of SSFS, we assume that the requests of the function $f_j$ with a lower weight are processed after $f_{j^{\prime}}$ with a higher weight. The schedule is denoted as $S_1$, and the total response time is denoted as $T_1$. Then, we put the function $f_j$'s executions before the function $f_{j^{\prime}}$. The changed schedule is denoted as $S_2$, and its total response time is denoted as $T_2$. Fig.~\ref{putting2} shows how to change $S_1$ to $S_2$. $T_1$ and $T_2$ are compared:
  \begin{align}
      T_1 - T_2 = &n_j(n_{j^{\prime}}t^e_{j^{\prime}}+t^v_{j^{\prime}}+t^l_{j^{\prime}}) - n_{j^{\prime}}(n_jt^e_j+t^v_{j}+t^l_{j}) \notag \\= &n_jn_{j^{\prime}}(t^e_{j^{\prime}} + \frac{t^v_{j^{\prime}}+t^l_{j^{\prime}}}{n_{j^{\prime}}}-t^e_j-\frac{t^v_{j}+t^l_{j}}{n_i}). \label{t1t2compare}
  \end{align}
  Since we have assumed that $w_j < w_{j^{\prime}}$, then $0 < t^e_{j^{\prime}} + \frac{t^v_{j^{\prime}}+t^l_{j^{\prime}}}{n_{j^{\prime}}}-(t^e_j+\frac{t^v_{j}+t^l_{j}}{n_j})$, and according to Eq.~\eqref{t1t2compare}:
  \begin{equation}
      T_1 > T_2.
  \end{equation}
  There exists a contradiction, and function $f_j$ and $f_{j^{\prime}}$ should exchange their places. Therefore, functions with a lower weight should be processed before functions with a higher weight. Therefore, Theorem~\ref{SSFSoptimal} holds.
 \end{proof}

\begin{algorithm}[t]
  \caption{\label{ESFFAlgorithm} Enhanced Shortest Function First}
%   \renewcommand{\algorithmicrequire}{\textbf{Input:}}
%   \renewcommand{\algorithmicensure}{\textbf{Output:}}
%   \REQUIRE  
    % $t^l_j$, $t^v_j$, $\overline{t^e_j}$, $n^w_j$, $r_i$,  $q_j$
%   \ENSURE   
    % $f_{j^{\prime}}$
    \If{A request $r_i$ arrives at the edge server}{
    Invoke Algorithm~\ref{FCP}\;
    }
    \If{A function instance $k^j_o$ finishes execution}{
    Invoke Algorithm~\ref{FRP}\;
    }
\end{algorithm}

% The characteristics of the current and future jobs are unknown, so it is hard to make scheduling decisions in an online manner.

\section{ESFF Algorithm}
% In this section, due to the real-time requirements of online function scheduling and the NP-hardness of the SFS problem, we design an Enhanced Shortest Function First (ESFF) algorithm. 
Inspired by the optimal scheduling algorithm of SSFS, the Enhanced Shortest Function First (ESFF) algorithm is designed for online scheduling by considering multiple function instances and unknown future information. First, in ESFF, the average execution time $\overline{t^e_j}$, average cold start latency $\overline{t^l_j}$, and average eviction time $\overline{t^v_j}$ of function $f_j$ are computed according to the history. Then, the definition of function weight is modified to adapt to the SFS problem. 
% Thus, the following modifications need to be made.

As shown in Algorithm~\ref{ESFFAlgorithm}, ESFF consists of two sub-policies, invoked at different time: (a) Function Creation Policy (FCP), invoked when a new request $r_i$ of function $f_j$ arrives. FCP decides whether a new function instance should be initialized. (b) Function Replacement Policy (FRP), invoked when a request $r_i$ is finished by an instance of function $f_j$. FRP decides whether this instance should be replaced by other function instances. At other time, each function instance continuously processes the requests waiting in its queue.

% the execution time of a function's requests are close and stable \cite{copik2021sebs,shahrad2019architectural}

% \addtolength{\topmargin}{0.041in}
\subsection{Function Creation Policy}
When a request $r_i$ of function $f_j$ arrives, FCP is invoked to judge whether a new instance of $f_j$ should be initialized. The total number of $f_j$'s requests waiting in its queue $q_j$ at $r_i$'s arrival time is denoted as $n^w_j$. 

% The function instance number of $f_j$ at that time is denoted as $m_j$.

\begin{algorithm}[t]
  \caption{\label{FCP} Function Creation Policy}
  \KwIn{$\overline{t^l_j}$, $\overline{t^v_j}$, $\overline{t^e_j}$, $n^w_j$, $r_i$,  $q_j$, $\mathbf{F}$, $\mathbf{K}^j$}
%   \ENSURE   
    % $f_{j^{\prime}}$
    \If{$n^w_j=0$ and $\sum_{k^j_o \in \mathbf{K}^j}{state(k^j_o)} > 0$}{
    Process $r_i$ in $f_j$'s idle instance\;
    }
    \Else{
    \If{$\sum_{f_j \in \mathbf{F}}{|\mathbf{K}^j|} < C$}{
    Compute $n^e_j$ based on Eq.~\eqref{afterinitializeremain}\;
    \If{$n^e_j > 0$}{
    Initialize an instance for function $f_j$\;
    }
    }
    \ElseIf{$\sum_{f_j \in \mathbf{F}}{|\mathbf{K}^j|} = C$}{
    Compute $\mathbf{S}$ based on Eqs.~\eqref{afterinitializeandevictremain} and~\eqref{largeS}\;
    \If{$|\mathbf{S}|> 0$}{
    $f_{j^{\prime}} = \argmax_{f_{j^{\prime}} \in \mathbf{S}}{\overline{t^e_{j^{\prime}}}}$\;
    Replace an idle instance of $f_{j^{\prime}}$ with $f_j$\;
    }
    }
    $q_j$.join($r_i$)\;
    }
\end{algorithm}

% where the item $\frac{\overline{t^l_j}|\mathbf{K}^j|}{\overline{t^e_j}}$ is the estimated number of waiting requests that would be processed during the cold start latency of $f_j$.

Algorithm~\ref{FCP} depicts the pseudocode of FCP. The key idea of FCP is that it makes scheduling decisions according to the current state of the system. If $n^w_j = 0$ (i.e., the queue $q_j$ is empty) and there exists an idle instance of the function $f_j$, the request $r_i$ is directly sent to an idle function instance for execution. In lines 4-7, if there is no idle function instance of $f_j$ and there are enough resources for initializing a new function instance of $f_j$, FCP prefers to initialize a new function instance with free resources than replace an idle function instance. It assesses whether the total response time can benefit from the function instance initialization, i.e., at least one request will be processed by the new function instance. The number of $f_j$'s waiting requests after the cold start latency $\overline{t^l_j}$ is estimated by:
  \begin{equation} \label{afterinitializeremain}
    \setlength{\belowdisplayskip}{2pt}
    \setlength{\abovedisplayskip}{2pt}
      n^e_j = n^w_j + 1 - \frac{\overline{t^l_j}|\mathbf{K}^j|}{\overline{t^e_j}}.
  \end{equation}
If $n^e_j > 0$, then it is likely that at least one request will be processed by the newly initialized function instance of $f_j$, so a new instance will be initialized. Otherwise, no function instance will be initialized. The request $r_i$ will join the queue $q_j$ in this and the following conditions.In lines 8-12, if there is no idle function instance of $f_j$ and the resources are insufficient to initialize such instance, it assesses whether the total response time can benefit from replacing an idle function instance. The number of $f_j$'s waiting requests after $f_{j^{\prime}}$'s eviction time and $f_j$'s cold start latency is estimated by:
  \begin{equation} \label{afterinitializeandevictremain}
      \setlength{\belowdisplayskip}{2pt}
    \setlength{\abovedisplayskip}{2pt}
      n^e_{j,j^{\prime}} = n^w_j + 1 - \frac{(\overline{t^l_j}+\overline{t^v_{j^{\prime}}})|\mathbf{K}^j|}{\overline{t^e_j}}.
  \end{equation}
  The candidate set of functions with enough eviction time that makes $n^e_{j,j^{\prime}} > 0$ is computed by:
  \begin{equation} \label{largeS}
  \setlength{\belowdisplayskip}{2pt}
    \setlength{\abovedisplayskip}{2pt}
      \mathbf{S}= \{f_{j^{\prime}}|n^e_{j,j^{\prime}} > 0 \,\, and \, \sum_{k^{j^{\prime}}_o \in \mathbf{K}^{j^{\prime}}}{state(k^{j^{\prime}}_o)} > 0\}.
  \end{equation}
  FCP chooses $f_{j^{\prime}} \in \mathbf{S}$ with the largest $\overline{t^e_{j^{\prime}}}$ and replaces an idle instance of $f_{j^{\prime}}$ with a new instance of $f_j$.
Otherwise, no function instance is initialized.

% where the item $\frac{(\overline{t^l_j}+\overline{t^v_{j^{\prime}}})|\mathbf{K}^j|}{\overline{t^e_j}}$ is the estimated number of waiting requests whose executions will be completed during $f_{j^{\prime}}$'s eviction time and $f_j$'s cold start latency. 

% where $\sum_{k^{j^{\prime}}_o \in \mathbf{K}^{j^{\prime}}}{state(k^{j^{\prime}}_o)} > 0$ means that there exists at least one idle function instance of $f_{j^{\prime}}$. 

\begin{algorithm}[t]
  \caption{\label{FRP} Function Replacement Policy}
  \KwIn{$k^j_o$, $\overline{t^e_j}$, $\overline{t^v_j}$, $\mathbf{K}^j$, $\overline{t^l_j}$, $n^w_{j^{\prime}}$}  
    % $t^l_j$,$t^v_j$,$\overline{t^e_j}$,$n^w_j$,$r_i$, $q_j$
%   \ENSURE   
    % $f_{j^{\prime}}$
    Compute $w_j$ based on Eq.~\eqref{initializedweight}\;
    $\mathbf{S} = \{f_{j^{\prime}}| n^w_{j^{\prime}}> 0\}$\;
    $f_x = f_j$, $w_x = w_j$\;
    \If{$|\mathbf{S}| < 1$}{
    \For{$j^{\prime} \in \mathbf{S}$}{
    Estimate $n^e_{j^{\prime},j}$ based on Eq.~\eqref{afterinitializeandevictremain}\;
    Compute $w_{j^{\prime}}$ based on Eq.~\eqref{wjprime}\;
    % $w_{j^{\prime}} = \overline{t^e_{j^{\prime}}} + \frac{\overline{t^l_{j^{\prime}}}+\overline{t^v_{j^{\prime}}}}{n_{j^{\prime}}}$\;
    \If{$w_{j^{\prime}} < w_x$}{
    $w_x = w_{j^{\prime}}$, $f_x = f_{j^{\prime}}$\;
    }
    }
    }
    \If{$f_x \neq f_j$}{
    % Evict $k^j_o$ and initialize a new instance of function $f_x$\;
    Replace $k^j_o$ with a new instance of function $f_x$\;
    }
    \ElseIf{$n^w_j>0$}{
    $k^j_o$ processes the first request in $q_j$\;}
    \Else{
    $state(k^j_o)=1$\;
    }
\end{algorithm}

\subsection{Function Replacement Policy}
When an instance $k^j_o$ of the function $f_j$ completes the execution of a request, FRP is invoked to decide whether $k^j_o$ should be replaced. Similar to the optimal scheduling algorithm of the SSFS, the function $f_j$ is associated with a weight to represent its scheduling urgency, defined as:
\begin{equation} \label{initializedweight}
\setlength{\belowdisplayskip}{2pt}
    \setlength{\abovedisplayskip}{2pt}
    w_j = \overline{t^e_j} + \frac{\overline{t^v_j}|\mathbf{K}^j|}{n^w_j}.
\end{equation}
The item $\overline{t^l_j}$ is removed from the numerator in Eq.~\eqref{initializedweight} for the function $f_j$ is already initialized.

% where $n_j=\frac{n^w_j}{|\mathbf{K}^j|}$
% where $n_{j^{\prime}}=\frac{n^e_{j^{\prime},j}}{|\mathbf{K}^j|+1}$. 

 Algorithm~\ref{FRP} shows the pseudocode of FRP. The prime idea of FRP is trying to replace the existing function instance with a more urgent function based on the comparison of function weights. The scheduling decision made by FRP also depends on the current state of the system as follows. In lines 1-9, FRP finds the function that can benefit most from replacing the instance $k^j_o$. Since the future information of requests is unavailable, we only consider the current requests. The function's queue should have waiting requests. Otherwise, there is no need for initializing a new function instance. The candidate function set is $\mathbf{S} = \{f_{j^{\prime}}| n^w_{j^{\prime}}> 0\}$. Similar to Algorithm~\ref{FCP}, the number of $f_{j^{\prime}}$'s waiting requests, $n^e_{j^{\prime},j}$, after evicting $k^j_o$ and initializing a new instance of $f_{j^{\prime}}$ is estimated based on Eq.~\eqref{afterinitializeandevictremain} in line 6. In line 7, the weight of each function in the candidate function set is computed by:
\begin{equation} \label{wjprime}
\setlength{\belowdisplayskip}{2pt}
    \setlength{\abovedisplayskip}{2pt}
   w_{j^{\prime}} = \overline{t^e_{j^{\prime}}} + \frac{(\overline{t^l_{j^{\prime}}}+\overline{t^v_{j^{\prime}}})(|\mathbf{K}^j|+1)}{n^e_{j^{\prime},j}}.
\end{equation}
 Finally, the function $f_{j^{\prime}} \in \mathbf{S}$ with the smallest $w_{j^{\prime}}$ and $w_{j^{\prime}} \leq w_j$ is selected to replace the function instance $k^j_o$. In lines 12-13, if there is no function satisfying the above requirements, the function instance $k^j_o$ will not be replaced. If there is any request waiting in the queue $q_j$, $k^j_o$ processes the first request at the head of the queue. Otherwise, $k^j_o$ turns idle, $state(k^j_o) = 1$.
% \begin{equation}
%     state(k^j_o) = 1 \notag
% \end{equation}
% The $n_{j^{\prime}}$ represents the proportion of waiting requests that should be processed by the newly initialized function instance.

\begin{figure}[tbp]
    \setlength{\abovecaptionskip}{0cm} %调整图片标题与图距离
    \begin{center}
        \setlength{\belowcaptionskip}{0cm} %调整图片标题与下文距离
      \includegraphics[width=0.65\linewidth]{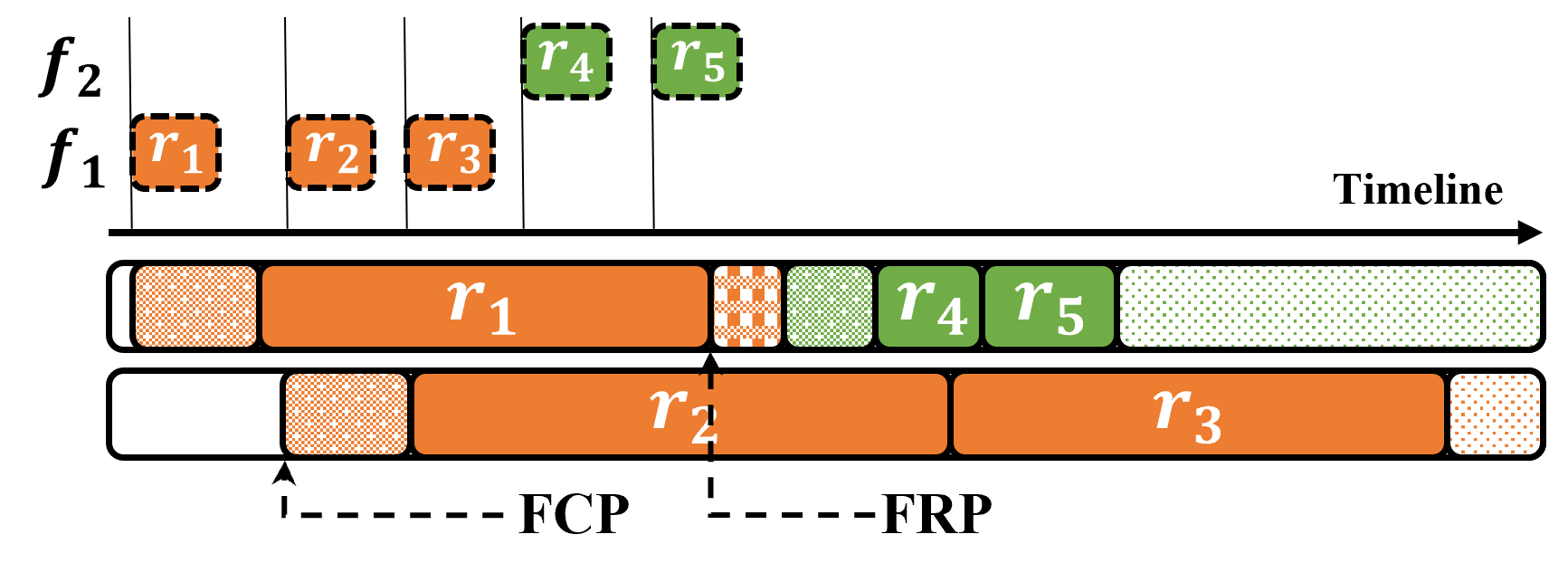}
      \caption{\label{schedulingexample} The scheduling example of ESFF. FCP initializes a new function instance for the request $r_2$, and replaces the function instance for request $r_4$.}
  \end{center}
  \vspace{-0.3cm}
  \end{figure}

\subsection{Scheduling Example}
Fig.~\ref{schedulingexample} is a scheduling example of ESFF. The edge server's capacity is two. Five requests of two functions arrive at the edge server at different times. Requests $r_1$, $r_2$ and $r_3$ are of function $f_1$ and requests $r_4$, $r_5$ are of function $f_2$. The request $r_1$ incurs a cold start at the beginning. When $r_2$ arrives, FCP decides to initialize a new instance of function $f_1$ to reduce $r_2$'s waiting time. At the arrival time of $r_3$, all function instances are busy and the computation resources are insufficient, so $r_3$ joins the queue $q_1$. When $r_1$ is finished, FRP decides to initialize a new instance of function $f_2$ to replace the instance of function $f_1$ since $w_2 < w_1$ and $n^w_{j^{\prime}} > 0$. 

% The request $r_3$ waits until the completion of $r_2$.

\begin{figure*}[t]
    \setlength{\abovecaptionskip}{0cm} %调整图片标题与图距离
    \setlength{\belowcaptionskip}{0cm} %调整图片标题与下文距离
    \centering
    \subfigure[Average Response Time]{
    \centering
    \includegraphics[width=0.31\textwidth]{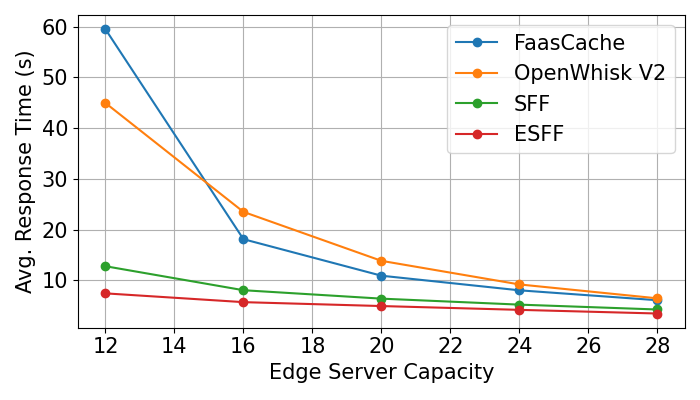}
    \label{exp_1_response_time}
    }%
    \subfigure[Average Slowdown]{
    \centering
    \includegraphics[width=0.31\textwidth]{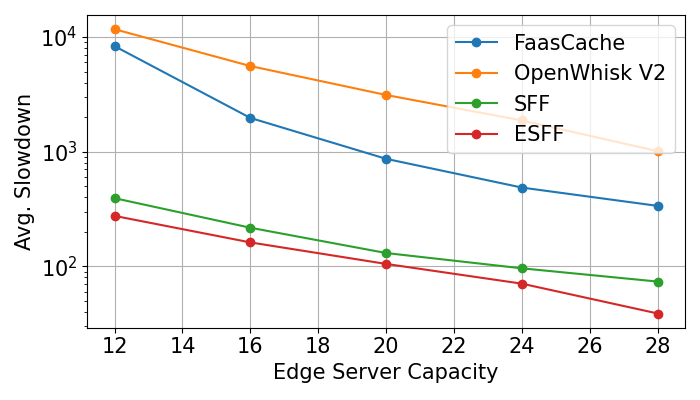}
    \label{exp_1_slowdown}
    }
    \subfigure[Average Cold Start Time]{
    \centering
    \includegraphics[width=0.31\textwidth]{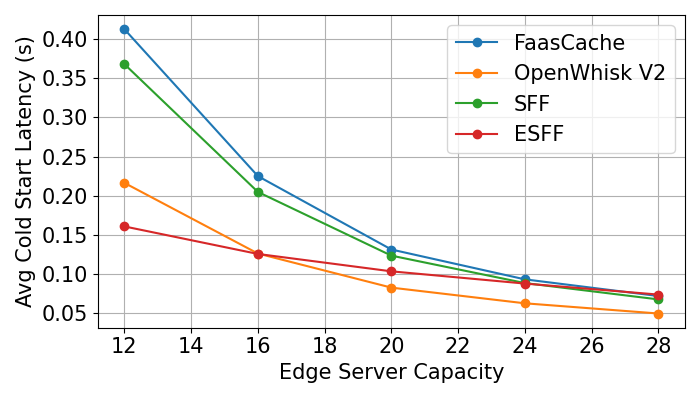}
    \label{exp_1_cold_start}
    }
    \centering
    \caption{\label{exp_1}Average response time, slowdown and cold start time under different edge server capacities.}
    \vspace{-0.3cm}
  \end{figure*}

  \begin{figure*}[t]
  \setlength{\abovecaptionskip}{0cm} %调整图片标题与图距离
    \setlength{\belowcaptionskip}{0cm} %调整图片标题与下文距离
    \centering
    \subfigure[Average Response Time]{
    \centering
    \includegraphics[width=0.31\textwidth]{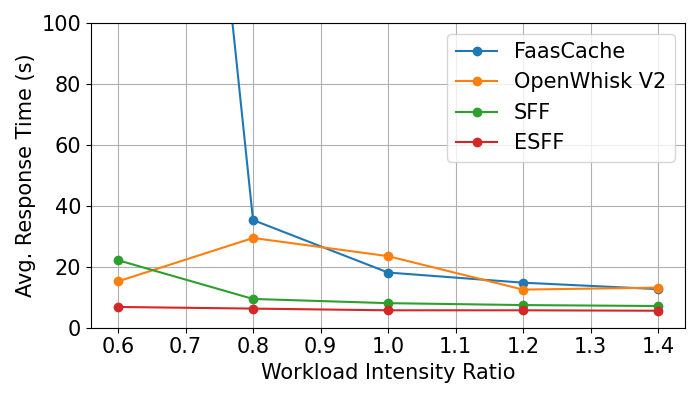}
    \label{exp_2_response_time}
    }%
    \subfigure[Average Slowdown]{
    \centering
    \includegraphics[width=0.31\textwidth]{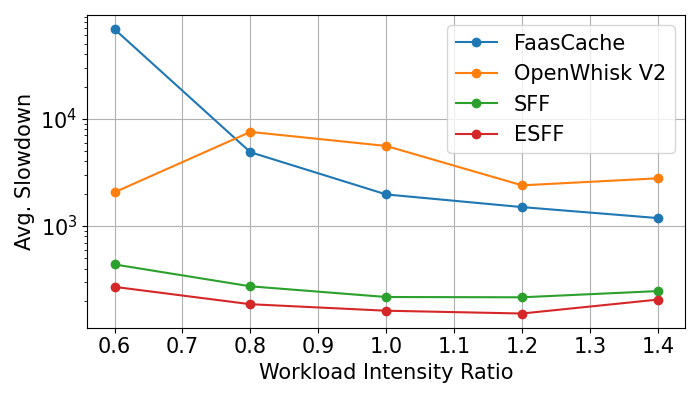}
    \label{exp_2_slowdown}
    }
    \subfigure[Average Cold Start Time]{
    \centering
    \includegraphics[width=0.31\textwidth]{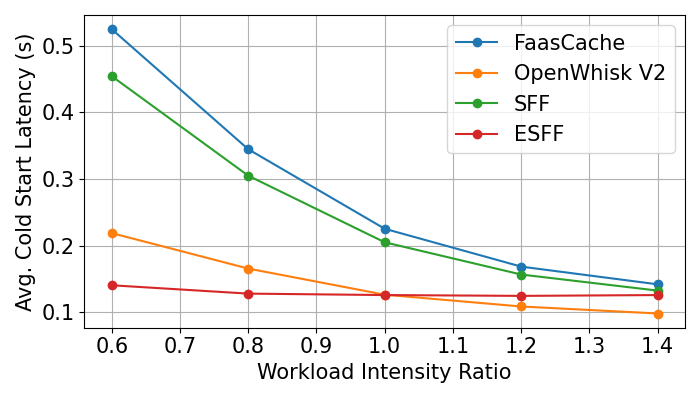}
    \label{exp_2_cold_start}
    }
    \centering
    \caption{\label{exp_2}Average response time, slow down and cold start time under different workload intensity ratios.}
    \vspace{-0.3cm}
  \end{figure*}

\section{Evaluation} 
\subsection{Setups}
 The ESFF algorithm, baseline algorithms, and the simulation environment are implemented in Python 3.6 on a desktop with an Intel Core i9-10900K 3.7 GHz CPU and 32GB RAM.

The serverless request traces are collected from a real cluster of Azure, containing $2.2\times 10^6$ requests during two weeks \cite{zhang2021faster}. The function, completion time, and execution time of each request are recorded. Limited by the measurement precision, some requests' execution time is recorded as 0, and is set to 1ms in the following experiments. We choose the first $6\times10^5$ requests as the simulation dataset. 

The default capacity of the resource-limited edge server is set to 16. Each function's cold start latency and eviction time are randomly chosen from [0.5,1.5] according to~\cite{yu2020characterizing}, since function details (e.g.,function codes and dependencies) are not included in the request traces \cite{zhang2021faster}.

We select three most related serverless function scheduling algorithms as baselines: (1) FaasCache \cite{fuerst2021faascache}. It schedules requests sequentially based on the arrival time. When there is no idle function instance for the current request, FaasCache tries to initialize a new instance or replace another instance. (2) OpenWhisk V2 \cite{OpenWhiskscheduler}. It makes requests of the same function wait in an individual queue. When the request at the queue head has waited for more than a fixed threshold (i.e., 100ms), a corresponding function instance is initialized. (3) Shortest Function First (SFF). The only difference between SFF and OpenWhisk is that SFF schedules requests sequentially based on the ascendant order of their average execution time.

The following two metrics are used to evaluate ESFF: (1) Average response time, $\frac{1}{|\mathbf{R}|}\sum_{r_i\in\mathbf{R}}{(t^c_i-t^a_i)}$. (2) Average slowdown, $\frac{1}{|\mathbf{R}|}\sum_{r_i\in\mathbf{R}}{\frac{t^c_i-t^a_i}{t^e_i}}$.

\subsection{Comparison with Baselines}
\textbf{Impact of edge server capacity.} In Fig.~\ref{exp_1}, ESFF is compared against baselines with respect to different edge server capacities. In Fig.~\ref{exp_1_response_time}, compared with the best baseline SFF, ESFF substantially reduces the average response time by 18\% to 40\%. When the edge server capacity increases, the function replacement times will be reduced for all scheduling algorithms. Therefore, the average cold start time is also reduced as shown in Fig.~\ref{exp_1_cold_start}. Besides, with more resources, the edge server can better handle the request bursts, and then request blocking problem will be mitigated. FaasCache and OpenWhisk V2 have very poor performance under very limited resources since they are unaware of massive cold starts and request blocking. Fig.~\ref{exp_1_slowdown} proves that by mitigating the request blocking problem, ESFF successfully reduces the response time of short requests and achieves the best fairness.

\begin{figure}[t]
    \setlength{\abovecaptionskip}{0cm} %调整图片标题与图距离
    \setlength{\belowcaptionskip}{0cm} %调整图片标题与下文距离
    \centering
    \subfigure[CDF of Response Time]{
    \centering
    \includegraphics[width=0.23\textwidth]{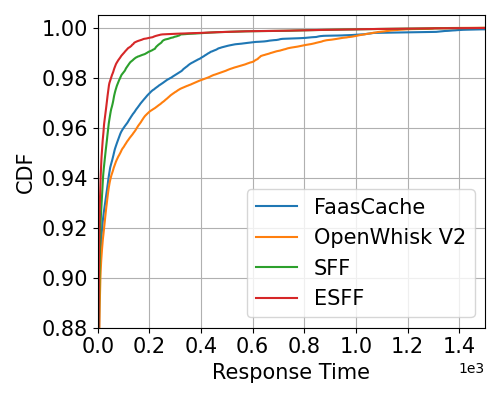}
    \label{cdf_response_time}
    }%
    \subfigure[CDF of Slowdown]{
    \centering
    \includegraphics[width=0.23\textwidth]{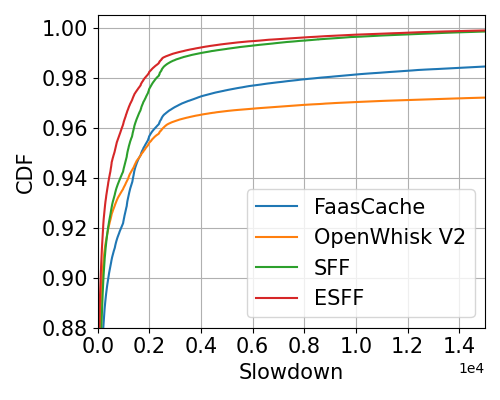}
    \label{cdf_slowdown}
    }
    \centering
    \caption{\label{cdf}CDF of response time and slowdown.}
    \vspace{-0.3cm}
  \end{figure}

% TODO fig.4(c) 横轴错了
  
\textbf{Impact of workload intensity.} In Fig.~\ref{exp_2}, the workload intensity ratio is defined to control the intervals between requests. In Fig.~\ref{exp_2_response_time}, ESFF achieves the lowest average response time under different workload intensity. With lower workload intensity ratios, the improvement of ESFF is more. In Fig.~\ref{exp_2_slowdown}, we observe that the average slowdown of ESFF increases as the workload intensity ratio increases from 1.0 to 1.4. The reason is that with longer intervals, the function instances of short requests will be more likely replaced, which brings more cold starts.

The cumulative distribution function (CDF) of response time and slowdown is shown in Fig.~\ref{cdf}. Compared with the baselines, the CDF curve of ESFF is always closer to the left, which means that the response time and slowdown of ESFF are consistently better. In particular, the P95 and P99 response time of ESFF are much better than other baselines. 

% The optimization of long requests significantly contributes to reducing the average response time, for they take a large portion of the total response time.

  \begin{figure}[tbp]
      \setlength{\abovecaptionskip}{0cm} %调整图片标题与图距离
    \setlength{\belowcaptionskip}{0cm} %调整图片标题与下文距离
    \begin{center}
      \includegraphics[width=1\linewidth]{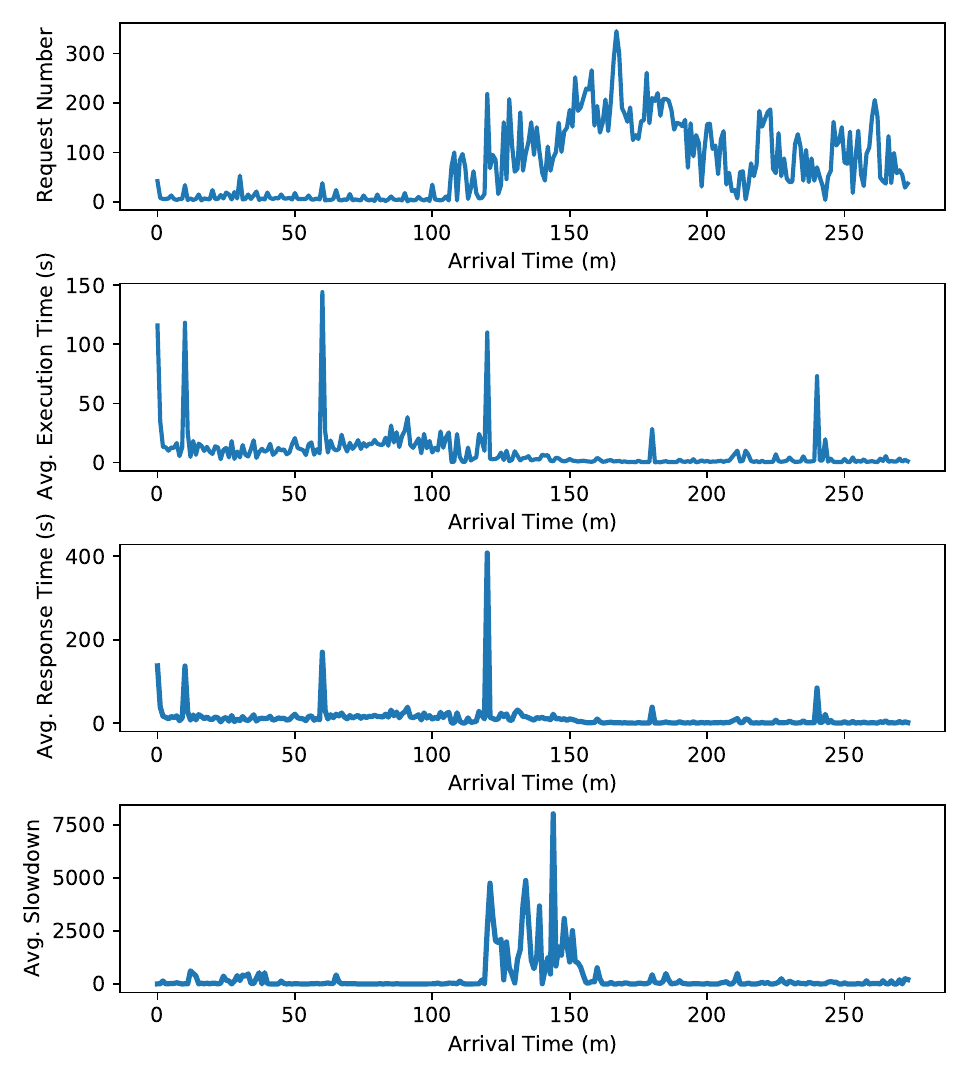}
      \caption{\label{deep_diving} Detailed scheduling results of ESFF over request arrival time. }
  \end{center}
  \vspace{-0.3cm}
  \end{figure}
  
\subsection{Further Analysis}
In Fig.~\ref{deep_diving}, we select 20000 continuous arriving requests to show detailed results of ESFF on each minute. Three points can be observed in this experiment: (1) It is obvious that with longer execution time, the response time also increases. (2) More request numbers further prolong the response time since the longer waiting time (i.e., short requests are blocked by long requests). (3) A large number of long requests not only affect the current requests but also largely slow down the upcoming requests. From these observations, we learn that the request burst (i.e., request number and request size) leads to significant response time. To deal with request bursts, offloading long requests to the powerful cloud will be a better choice. 

% TODO 增加一些future work之类的，比如在burst时候可以采取offloading的方式来减少latency

\section{Conclusion}
In this paper, we formulated the SFS problem for resource-limited edge computing. We proposed a polynomial-time optimal scheduling algorithm for a simplified offline form of SFS. Then, we designed an Enhanced Shortest Function First (ESFF) algorithm. To avoid wasteful cold starts, ESFF selectively decides the initialization of new function instances when requests arrive. To deal with dynamic workloads, ESFF judiciously replaces function instances based on function weights. Extensive simulations based on real-world traces show ESFF's substantial outperformance over existing baselines.

% \section*{Acknowledgment}
% This work was supported in part by the Guangdong Key Lab of AI and Multi-modal Data Processing, United International College (UIC), Zhuhai under Grant 2020KSYS007, in part by the Chinese National Research Fund (NSFC) under Grant 61872239, in part by The Institute of Artificial Intelligence and Future Networks funded by Beijing Normal University (Zhuhai) Guangdong, China Zhuhai Science-Tech Innovation Bureau under Grants ZH22017001210119PWC and 28712217900001, and in part by the Interdisciplinary Intelligence SuperComputer Center of Beijing Normal University Zhuhai. This work has been partially supported by the National Key R\&D Program of China No.2020YFB1806700, NSFC Grant 61932014, Project BE2020026 supported by the Key R\&D Program of Jiangsu, China.
\bibliographystyle{IEEEtran}

\end{document}